\newtheorem{theorem}{Theorem}
\newtheorem{lemma}{Lemma}
\newcommand{\rOneTMK}[1]{\textcolor{black}{#1}}
\title{MGCBS: An Optimal and Efficient Algorithm for Solving Multi-Goal Multi-Agent Path Finding Problem}
\author{
Mingkai Tang$^1$
\and
Yuanhang Li$^1$\and
Hongji Liu$^1$\and
Yingbing Chen$^1$\and
Ming Liu$^2$\And
Lujia Wang$^2$ \\
\affiliations
$^1$Hong Kong University of Science and Technology\\
$^2$Hong Kong University of Science and Technology (Guangzhou)\\
\emails
\{mtangag, yliog, hliucq, ychengz\}@connect.ust.hk,
\{eelium, eewanglj\}@hkust-gz.edu.cn
}
\begin{document}

\maketitle

\begin{abstract}
With the expansion of the scale of robotics applications, the multi-goal multi-agent pathfinding (MG-MAPF) problem began to gain widespread attention. This problem requires each agent to visit pre-assigned multiple goal points at least once without conflict. Some previous methods have been proposed to solve the MG-MAPF problem based on Decoupling the goal Vertex visiting order search and the Single-agent pathfinding (DVS). However, this paper demonstrates that the methods based on DVS cannot always obtain the optimal solution. To obtain the optimal result, we propose the Multi-Goal Conflict-Based Search (MGCBS), which is based on Decoupling the goal Safe interval visiting order search and the Single-agent pathfinding (DSS). Additionally, we present the Time-Interval-Space Forest (TIS Forest) to enhance the efficiency of MGCBS by maintaining the shortest paths from any start point at any start time step to each safe interval at the goal points. The experiment demonstrates that our method can consistently obtain optimal results and execute up to 7 times faster than the state-of-the-art method in our evaluation. 
\end{abstract}

\section{Introduction}
With the development of the robotic industry, the multi-agent system has attracted more and more attention \cite{salzman2020research,stern2019multi,tjiharjadi2022systematic}. One of the critical problems to be solved is the multi-agent path finding (MAPF) problem. The MAPF problem requires planning a conflict-free path for each agent from its starting point to its goal point.
MAPF is involved in many practical application scenarios in the real world, such as aircraft towing vehicles \cite{morris2016planning}, video games \cite{ma2017feasibility} and traffic management \cite{choudhury2022coordinated,dresner2008multiagent}. 

In the MAPF problem, each agent can be assigned only one goal point. This setting does not meet the needs of some large-scale robot applications.
For example, in an automated warehouse scenario, each robot may need to deliver multiple goods in one trip. In this case, the robot needs to be provided with a collision-free path with multiple goal points.
This problem can be modeled as a multi-goal multi-agent pathfinding (MG-MAPF) problem.
The solver of MG-MAPF needs to calculate a collision-free path for each agent so that the agent can visit each of its goals at least once in an arbitrary visiting order. Figure \ref{fig:instance} shows an example of MG-MAPF.

\begin{figure}[t]
\centering
\includegraphics[width=0.24\textwidth]{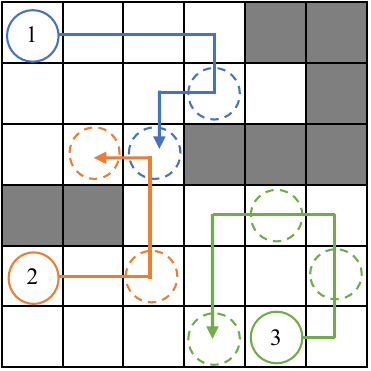} 
\caption{An example of the MG-MAPF with three agents. The grey cells represent the impassable areas occupied by obstacles.
The solid circles indicate the start point of the agent, and the number marked inside is the agent ID. The dotted circles represent the goal points of the agent in the corresponding color. The colored arrow marks the path that can visit all goal points from the start point of the agent.}
\label{fig:instance}
\end{figure}

Solving the MG-MAPF problem optimally, even in degenerate scenarios, can be time-consuming. The open loop traveling salesman problem \cite{applegate2011traveling}, which is widely recognized as an NP-hard problem, can be reduced to a subset of the MG-MAPF that only considers one agent. Furthermore, the classical MAPF problem, which is proved to be NP-hard \cite{yu2013structure}, can also be reduced to a subset of the MG-MAPF, which considers only one goal for each agent. Therefore, it can be concluded that the problem of optimally solving the MG-MAPF is NP-hard.


\rOneTMK{
Some methods have been proposed to solve the MG-MAPF problem by Decoupling the goal Vertex visiting order search and the Single-agent pathfinding (DVS), such as the Hamiltonian Conflict-based Search \cite{surynek2021multi}, which is the current state-of-the-art (SOTA) method for the MG-MAPF problem. We named the MG-MAPF solver that is based on DVS as the DVS method.
To find the shortest path to visit all goals for a single agent under constraints, DVS methods search a goal vertex visiting order iteratively. At each iteration, an unvisited goal is enumerated and tried to append to the end of the order. The corresponding path is constructed by concatenating the shortest paths under constraints between two neighbor goals on the goal vertex visiting order.}

\rOneTMK{
However, this paper provides evidence through the case study and experiment that DVS methods cannot always obtain the optimal result for the MG-MAPF problem.
We introduce a new approach, the Multi-Goal Conflict-Based Search (MGCBS), to solve the MG-MAPF problem optimally and efficiently. In contrast to the DVS method, MGCBS is based on Decoupling the goal Safe interval visit order search and the Single-agent pathfinding (DSS). }
The safe interval (SI) refers to a safe configuration with a maximal time period, where the `maximal' means that if this time period were to be extended by a time step in any direction, the collision would occur \cite{phillips2011sipp}. The goal safe interval (GSI) refers to the safe interval at the goal vertex. Formally, let $([t^{SI}_0, t^{SI}_1], s^{SI})$ be the SI at vertex $s$ for the maximal time interval $[t^{SI}_0, t^{SI}_1]$. $([t^{SI}_0, t^{SI}_1], s^{SI})$ is an GSI iff $s^{SI}$ is a goal vertex.
We name the MG-MAPF solver that is based on DSS as the DSS method. The DSS method searches a GSI visiting order so that at least one GSI is visited at each goal. The corresponding path is obtained by concatenating the shortest path to visit each GSI following the order. 
In addition, we propose a data structure, the Time-Interval-Space Forest (TIS Forest), to reduce redundant calculations of multiple queries in the low-level solver of MGCBS by maintaining the shortest paths and their length from any start vertex at any start time to each GSI. 
Overall, the main contributions of this paper are as follows.
\begin{enumerate}
    \item We present that the DVS methods cannot always obtain the optimal solution by case study and experiment.
    \item We introduce a two-level approach, MGCBS, to solve MG-MAPF, achieving high computational efficiency in obtaining optimal solutions.  
    \item We present the TIS Forest, which maintains the shortest paths to each GSI from any start vertex at any start time, minimizing redundant calculations of multiple queries.
    \item We provide the theoretical proof of the optimality and completeness of MGCBS.
    \item We conducted a comprehensive experimental evaluation and compared our proposed method with the current SOTA method. Compared with the SOTA method, our method can consistently obtain the optimal solution while achieving a maximum speedup ratio of up to 7 in our evaluation.
\end{enumerate}

\section{Related Work}

Some variants of the MAPF problem, which consider more than one goal point for an agent, have recently been studied. The Multi-Agent Pickup-and-Delivery (MAPD) problem and its variants, which require the agent to pick up the object in one location and deliver it to another location, were studied in \cite{vcap2015prioritized,ma2017lifelong,xu2022multi}.
In \cite{zhang2022multi}, the Multi-Agent Path Finding with Precedence Constraints (MAPF-PC) problem was proposed, where the visiting order of goal points needs to satisfy some precedence constraints. The Multi-Agent Simultaneous Multi-Goal Sequencing and Path Finding (MSMP) problem, which requires assigning goals to each agent before pathfinding, was solved in \cite{ren2021ms,ren2022conflict}. However, the settings of the above problems differ from the MG-MAPF problem, making their methods not directly usable in the MG-MAPF problem.

The MG-MAPF problem was firstly discussed in \cite{surynek2021multi}, and two solutions were proposed: Hamiltonian Conflict-based Search (HCBS) and Satisfiability Modulo Theories Conflict-based Search (SMT-HCBS). The HCBS, which can be categorized as a DVS method, typically runs faster than SMT-HCBS due to the leverage of its heuristic function. However, it has room for improvement in optimality and efficiency. 
\rOneTMK{For optimality, our method improves upon HCBS by using DSS. For efficiency, our method uses the TIS Forest to reduce redundant calculations of the multiple queries of the shortest path for each agent.}

\section{Problem Definition}
The MG-MAPF problem is defined as follows. A set of agents $A = \{a_1, a_2, ..., a_k\}$ can move on an undirected graph $G(V, E)$ where each edge is of unit length. Let $n_i$ be the number of goals of $a_i$. The task of agent $a_i$ can be described as $\Omega_a = \{s_{i}, g_{i}\}$ where $s_{i}$ is the start vertex and $g_{i} = \{g_{i}^{1}, g_{i}^{2}, ..., g_{i}^{n_i}\}$ denotes the goal vertices of the agent. At each time step, agents can choose to wait at the current vertex or move to an adjacent vertex. The agent will stay at one of its goal vertices after completing all movements without incurring any additional cost. The solution to the MG-MAPF problem is a collection of agents' collision-free paths, where the agent can start from its start vertex and visit all its goal vertices at least once with arbitrary visiting order. A collision occurs when two agents are located in the same vertex (vertex conflict) or moving along the same edge (edge conflict) at the same time step. The cost of an agent's path is the total time steps used to visit all goals. We use the summation of the cost (SOC) as the objective of the problem, meaning that the cost of the solution is the summation of the cost of each agent's path.

\section{Case Study}
This section will provide an example of the MG-MAPF problem to illustrate that the DVS method is not optimal.

\begin{theorem}
The optimality of the methods based on decoupling the goal vertex visiting order search and single-agent pathfinding cannot be guaranteed in the MG-MAPF problem.
\end{theorem}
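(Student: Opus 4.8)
The statement is a non-optimality (impossibility) claim, so my plan is to prove it by exhibiting a single concrete counterexample: a small MG-MAPF instance on which every DVS execution yields a solution whose SOC strictly exceeds the optimum. Since DVS commits to an ordering of goal \emph{vertices} and then stitches together the shortest constrained path between consecutive goals, the counterexample must be engineered so that the shortest path to an intermediate goal forces the agent into a globally suboptimal timing. Phrased in the paper's terms, the optimum should require the multi-goal agent to occupy an intermediate goal vertex during a \emph{later} safe interval than the one a shortest-path concatenation would inevitably select.

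First I would fix a small grid graph $G(V,E)$ with a handful of vertices and two or three agents, one of which---say $a_1$---has two goals $g_1^1$ and $g_1^2$. The remaining agents would be given start/goal pairs whose (essentially unique) optimal paths act as moving obstacles, carving out the safe intervals at $g_1^1$ and along the edges leading toward $g_1^2$ in a time-dependent way. The design target is an instance where reaching $g_1^1$ as early as possible---i.e. through its earliest safe interval---blocks the subsequent leg to $g_1^2$, whereas delaying the arrival at $g_1^1$ until a later safe interval opens a short, conflict-free continuation.

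Next I would compute the genuine optimum, exhibiting a conflict-free joint plan together with its SOC value, and verify that $a_1$'s optimal path visits $g_1^1$ during a non-earliest safe interval. For the DVS side I would then enumerate the finitely many goal-vertex visiting orders available to $a_1$ (here only $g_1^1, g_1^2$ and $g_1^2, g_1^1$) and, for each order, argue that concatenating the shortest constrained path between consecutive goals produces a cost strictly greater than the optimum: the shortest-path segment locks the agent into the earliest reachable safe interval at the intermediate goal, which the instance was built to penalize. Establishing this for every ordering, together with the fact that DVS searches exactly over such orderings, shows that no DVS execution can attain the optimum.

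The hard part will be making the instance simultaneously small enough to verify the optimum and every DVS outcome by hand, yet rich enough that the safe-interval distinction is genuinely load-bearing---that is, ensuring the penalty is an unavoidable consequence of local shortest-path concatenation rather than an artifact some clever tie-breaking inside DVS could dodge. Concretely, I must guarantee that for \emph{both} goal orders the shortest segment between the two goals is strictly forced through the ``bad'' safe interval, so the suboptimality cannot be escaped by reordering the goals or by any admissible choice the DVS low-level solver is permitted to make.
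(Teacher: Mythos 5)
Your overall strategy---proving non-optimality by exhibiting a concrete instance on which every DVS execution is strictly worse than the optimum---is exactly the paper's approach, and your design criteria (one agent with two goals, other agents acting as moving obstacles, an optimum that requires a delayed arrival at an intermediate goal) match the paper's Figure~2 example closely. But your proposal stops at the plan: no instance is actually constructed, no optimum is computed, and no DVS cost is verified. For a claim of this kind the counterexample \emph{is} the proof; everything you wrote is a specification of what the counterexample must satisfy, and you yourself flag the construction as ``the hard part'' without carrying it out. As it stands, the proposal does not establish the theorem.

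The missing piece is also subtler than ``find a bad instance and enumerate orderings.'' DVS methods such as HCBS are CBS-based: when the concatenated path conflicts with another agent, constraints are added and segments are re-planned, so you must argue that no constraint-resolution sequence inside DVS can ever recover the optimal path. The paper's proof does this with a specific, time-indexed observation: the conflict in its example occurs at time step $6$, after agent $1$ has already reached the intermediate goal $g_1^2$, so any constraint generated from that conflict leaves the segment from $s_1$ to $g_1^2$ unchanged---hence DVS can never discover the optimal plan in which agent $1$ waits two steps at $s_1$ and reaches $g_1^2$ late (cost $120$ versus the best DVS cost of $126$). Without an argument of this form, tied to actual time steps in an actual instance, your statement that the shortest segment ``locks the agent into the earliest reachable safe interval'' remains an assertion about the instance you intend to build, not a verified fact. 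One further note: the paper dispatches the other goal ordering not via the safe-interval mechanism but simply because its cost lower bound ($214$) already exceeds the optimum; requiring \emph{both} orderings to be forced through the ``bad'' safe interval, as you propose, is stronger than necessary and would only make the construction harder.
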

\begin{proof}
An example that the DVS method cannot find the optimal solution is shown in Figure \ref{fig:prove}. 
In this example, agent $2$ only has one possible goal vertex visiting order, while agent $1$ has two possible situations.
If agent $1$ chooses to go to $g^1_1$ before going to $g^2_1$, the lower bound of the total cost is $207 + 7 = 214$.
If agent $1$ chooses to travel to $g^2_1$ before $g^1_1$, the lower bound of the total cost will be $111 + 7 = 118$.  
Now we consider to adopt the second choice.
In this case, an edge conflict between vertex $D$ to vertex $E$ occurs at the time step $6$, which is in the path of agent $1$ from $g^2_1$ to $g^1_1$.
Based on the definition of DVS, the path from $g^2_1$ to $g^1_1$ may be adjusted to avoid the conflict. For example, in HCBS, constraints are created on the path of $g^2_1$ to $g^1_1$ of agent $1$.
However, the path of agent $1$ from the start vertex $s_1$ to the goal vertex $g^2_1$ will remain unchanged because the time step of conflict surpasses the time step of reaching $g^2_1$.
One of the minimal-cost solutions generated by the DVS method is that agent $1$ goes to $g^2_1$ before $g^1_1$, and agent $2$ remains at vertex $C$ for $8$ time steps to give paths to agent $1$ before proceeding directly to $g^1_2$. The total cost is $111 + 15 = 126$.
However, there exists an alternative solution with a lower total cost that deviates from the shortest path for agent $1$ starting from $s_1$ towards $g^2_1$. 
Agent $1$ remains at vertex $s_1$ for $2$ time steps. Subsequently, it proceeds to vertex $g^2_1$ and then travels to vertex $g^1_1$. 
In this solution, the total cost is $113 + 7 = 120$, which is better than the minimal cost solution given by the DVS method among all goal vertex visiting orders. 
\end{proof}
\rOneTMK{
The reason why the DVS method is not optimal is because it incorrectly assumes that the path reaching the goal vertex earlier is always better than the path reaching it later. Under the given goal vertex visiting order, it only generates one path to reach each goal at the earliest possible time step. However, in some cases, the optimal path might not reach a subset of goal vertices at the earliest possible time step. The DVS method misses these paths.}

\rOneTMK{
To obtain the optimal path, we propose to search based on DSS, which only assumes that the path reaching the GSI earlier is always better than the path reaching it later. In contrast to the DVS method, DSS can obtain multiple paths under a given goal vertex visiting order when there are multiple GSIs at some goal vertices. Considering the same case above, when the path between $g^2_1$ to $g^1_1$ is adjusted and causes conflicts at $g^2_1$ at a time step larger than the earliest time step to reach $g^2_1$, more than one GSI will appear in $g^2_1$, providing the potential to search the non-earliest path to $g^2_1$ from $s_1$.}

It is observed that the DVS method is more likely to obtain a non-optimal result in crowded scenarios. On the one hand, in scenarios with few agents and obstacles, the agent can temporarily move to the neighbor vertex to give paths to another agent and return to the original vertex once another agent has passed. In this way, the DVS method can generate a path from an earlier SI to a later SI without missing any potential optimal paths. On the other hand, in crowded scenarios, the agent may not be able to return to the original vertex quickly. It makes the agent infeasible for some time steps in the later SI, making the DVS method miss some potential optimal paths, which can be obtained by the DSS method.


\begin{figure}[t]
\centering
\includegraphics[width=0.43\textwidth]{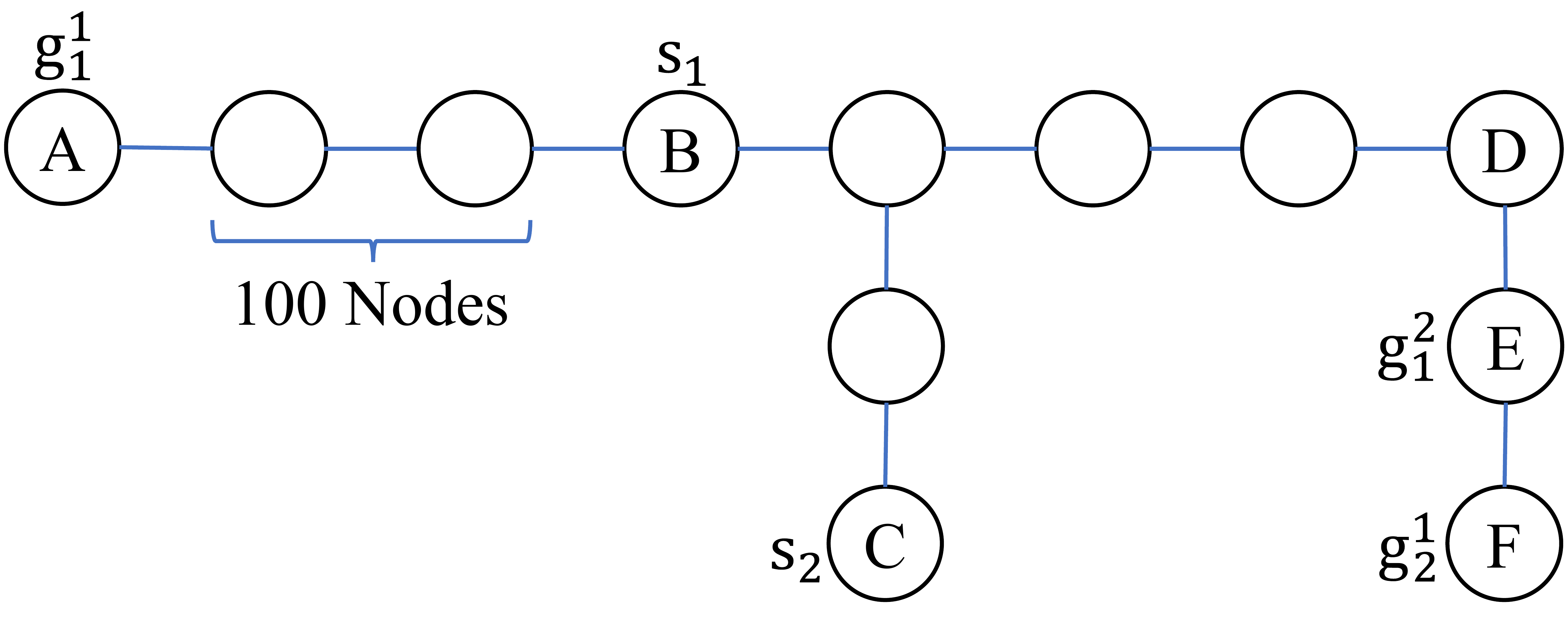} 
\caption{An example of the MG-MAPF with two agents is represented in a planar graph. The agent $1$ starts at vertex $B$ and has two goals located at vertex $A$ and vertex $E$. The agent $2$ starts at the vertex $C$ and only has one goal at vertex $F$. }
\label{fig:prove}
\end{figure}

\section{Methodology}

We propose an optimal and efficient two-level method, the MGCBS, for the MG-MAPF problem based on DSS. To improve the efficiency of the search process, we introduce the TIS Forest data structure to reduce redundant calculations of multiple queries. A TIS Forest corresponds to an agent and is composed of several Time-Interval-Space Trees (TIS Tree), which corresponds to a specific GSI of the agent. The TIS Tree can be used to get the shortest path and its length to the corresponding GSI from any start vertex at any start time step.

\subsection{MGCBS}
The MGCBS is a two-level solver that can effectively calculate the optimal path for each agent to visit all goals without conflict. The high-level solver employs a constraint tree to manage conflicts between agents, while the low-level solver uses an A*-based solver to find the best GSI visiting order.
\subsubsection{High-level Solver}
The high-level solver of the MGCBS is an extension of the high-level solver of the conflict-based search (CBS) \cite{sharon2015conflict}. It builds a constraint tree to solve conflicts between different agents.
The constraints consist of vertex constraints, which prevent an agent from occupying a vertex at a specific time step, and edge constraints, which prohibit an agent from traversing an edge at a given time step. Each constraint tree node saves the TIS Forests, the constraint set, each agent's path, and the value of SOC. The implementation of the TIS Forest will be discussed in subsection \ref{subsection:sif}.


In the high-level solver, a distance table $D$ is built to store the distance from each vertex to each goal vertex for each agent, which will be used in the low-level solver. It is observed that there are no vertex constraints at the root node of the constraint tree, causing only one GSI with a whole time interval at each goal vertex. Therefore, $D$ can be built by querying the TIS Forest at the root node of the constraint tree.

The differences between the high-level solver of basic CBS and MGCBS are the operations on the TIS Forest and the distance table.
At the beginning of the high-level solver of MGCBS, the TIS Forests in the root node are built and used to construct the distance table. 
Then, the TIS Forests, the distance table, and the agents' task information are fed to the low-level solver to calculate the path that visits all goal vertices without considering other agents. 
When a new constraint tree node is generated, its TIS Forests are copied from its parent node, and one of them is reconstructed by the new constraint set. 
The new TIS Forest, the distance table, and the agent's task information are put into the low-level solver to compute the shortest path under the constraints.

Algorithm \ref{alg:high} shows the pseudocode of the high-level solver. In lines 1  $\sim$ 9, the root node is constructed and put into the open set. The TIS Forests are built in line 4, and the distance table is constructed in line 5. The initial path of each agent is calculated in line 6. In lines 11 $\sim$ 16, the minimum cost node is found and checked whether it contains a conflict. If no, the final solution is found. Otherwise, the constraints are built according to the earliest conflict. 
In lines 19  $\sim$ 27, new nodes are constructed for each constraint, and the corresponding constraint is added to the constraint set. The TIS Trees that are related to the constrained agent are reconstructed according to the new constraint set in line 22. The low-level solver is called to build the path of a single agent in line 23.

\begin{algorithm}[tb]
\small
\caption{$MGCBS_{high}$}
\label{alg:high}
\textbf{Input}: agents $A$, graph $G$

\begin{algorithmic}[1] 
\STATE  $R$ $\leftarrow$ new node
\STATE  $R.cons$ $\leftarrow$ $\emptyset$
\FOR{ each agent $a$ in $A$ }
\STATE  $R.tisf[a] \leftarrow BuildTISForest(G, a, \emptyset)$
\STATE $D[a_i] \leftarrow BuildDistanceTable(R.tisf[a])$
\STATE $R.paths[a] \leftarrow$ \\ \quad $MGCBS_{low}(R.tisf[a], D[a], a, \emptyset)$ // Algorithm 2
\ENDFOR
\STATE $R.cost$ $\leftarrow$ calculate the SOC of $R.paths$.
\STATE $OPEN$ $\leftarrow$ \{$R$\}
\WHILE{$OPEN$ $\neq$ $\emptyset$}
\STATE $N$ $\leftarrow$ minimum cost node from $OPEN$.
\STATE $OPEN$ $\leftarrow$ $OPEN \backslash \{N\}$
\STATE $F$ $\leftarrow$ the earliest collision in $N$
\IF{$F$ is $None$}
\RETURN $N.paths$
\ENDIF
\STATE $C$ $\leftarrow$ build constraints from $F$
\FOR{constraint $c=(a, t, v)/(a, t, v_i, v_j)$ in $C$}
\STATE $P$ $\leftarrow$ new node
\STATE $P.tisf$, $P.paths$ $\leftarrow$ $N.tisf$, $N.paths$
\STATE $P.cons$ $\leftarrow$ $N.cons$ $\cup$ $c$
\STATE $P.tisf[a] = BuildTISForest(G, a, P.cons)$
\STATE $P.paths[a]$ $\leftarrow$  $MGCBS_{low}$($P.tisf[a],$\\ \quad \quad \quad \quad \quad \quad \quad \quad \quad $D[a], a, P.cons$) // Algorithm 2
\IF{$P.paths[a]$ is not $NULL$}
\STATE $P.cost$ $\leftarrow$ calculate the SOC of $P.paths$
\STATE $OPEN$ $\leftarrow$ $OPEN$ $\cup$ \{$P$\}
\ENDIF
\ENDFOR
\ENDWHILE
\RETURN $NULL$
\end{algorithmic}
\end{algorithm}

\begin{algorithm}[tb]
\small

\caption{$MGCBS_{low}$}
\label{alg:low}
\textbf{Input}: TIS Forest $tisf$, distance table $D$, agent $a$, constraint set $cons$

\begin{algorithmic}[1] 
\STATE  $R$ $\leftarrow$ new node
\STATE  $R.L$ $\leftarrow$ $\emptyset$
\STATE  $R.p$ $\leftarrow$ $GetEarliestSI(a.start, cons)$
\STATE $R.g \leftarrow 0$
\STATE $R.h \leftarrow Cost(GetMST(D, R.L, (R.p).v))$
\STATE $R.f \leftarrow R.g + R.h$
\STATE $OPEN, CLOSED$ $\leftarrow$ \{$R$\}, $\emptyset$
\WHILE{$OPEN$ $\neq$ $\emptyset$}
\STATE $N$ $\leftarrow$ minimum $f$ node from $OPEN$.
\STATE $OPEN$ $\leftarrow$ $OPEN \backslash \{N\}$
\STATE $CLOSED$ $\leftarrow$ $CLOSED \cup \{N\}$
\IF{$N.L$ contains all goals of $a$}
\STATE $order \leftarrow GenerateGoalVisitedOrder(N)$
\STATE $path \leftarrow GeneratePath(tisf, order)$
\RETURN $path$
\ENDIF
\FOR{$q$ in $GetAllPotentialGSI(a, cons, N.L)$}
\STATE $M$ $\leftarrow$ $GetNode$($OPEN$, $CLOSED$, $N.L\cup \{q.v\}$, $q$)
\IF{$M \notin CLOSED$}
\STATE $tist \leftarrow$ $GetTISTree$($q$, $tisf$)
\STATE $len \leftarrow GetPathLength(tist, (N.p).v, N.g)$
\IF{$M \in OPEN$}
\STATE $M.g \leftarrow min(M.g, N.g+len)$
\STATE $M.f \leftarrow M.g + M.h$
\ELSE
\STATE $M.g \leftarrow N.g+len$
\STATE $M.h \leftarrow Cost(GetMST(D, M.L, (M.p).v))$
\STATE $M.f \leftarrow M.g + M.h$
\STATE $OPEN$ $\leftarrow$ $OPEN$ $\cup$ \{$M$\}
\ENDIF
\ENDIF
\ENDFOR
\ENDWHILE
\RETURN $NULL$
\end{algorithmic}
\end{algorithm}

\begin{figure}[t]
\centering
\includegraphics[width=0.45\textwidth]{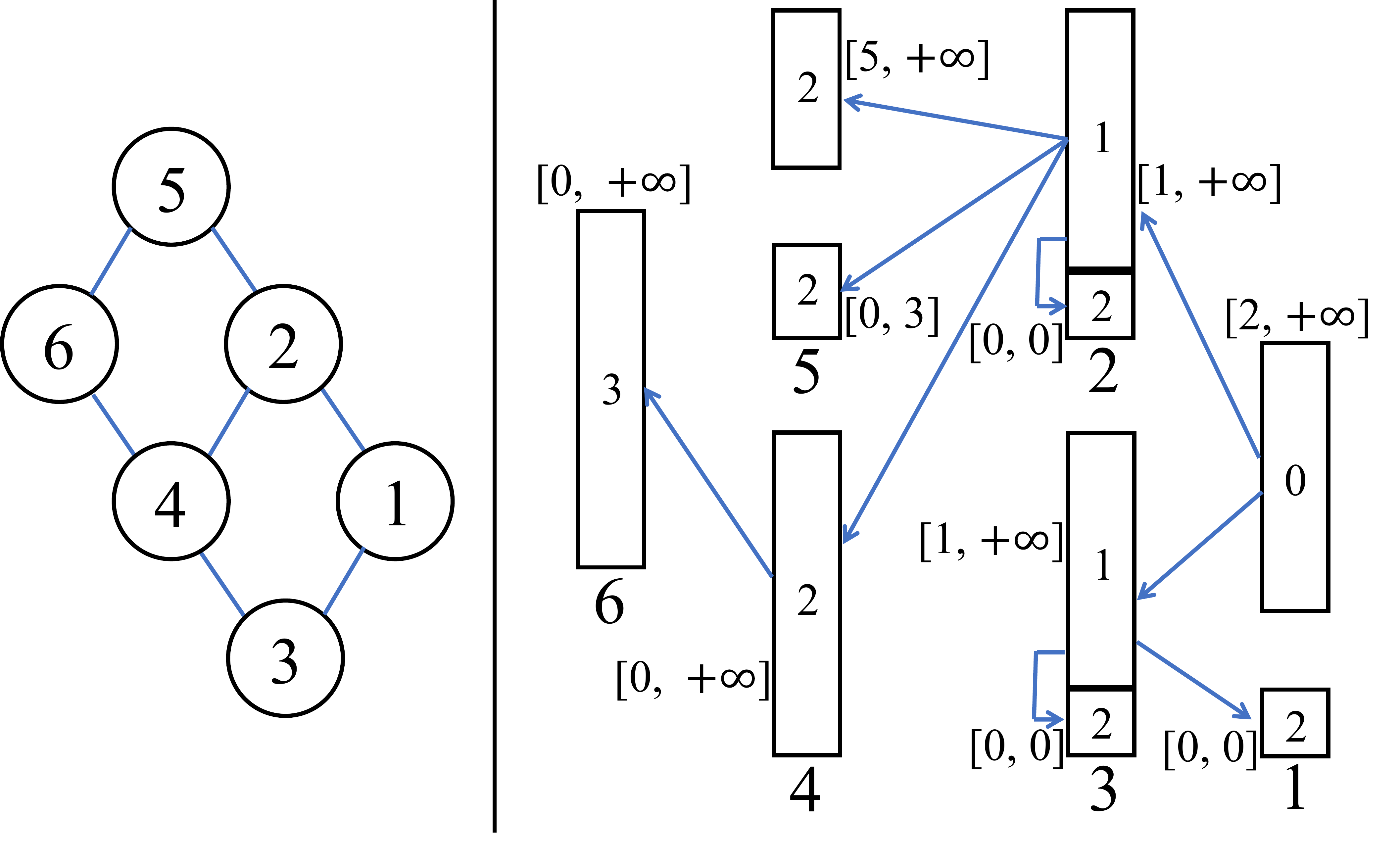} 
\caption{
The left subfigure shows the graph on which the TIS Tree was built. The number in the circle is the vertex ID. The right subfigure shows the TIS Tree whose seed GSI is $[2, +\infty]$ at vertex $1$. The constraint set includes two vertex constraints, including vertex $1$ at time step $1$, and vertex $5$ at time step $4$. The block represents the node in the tree, where the number in the block is the cost of the node. The number under the block is the vertex ID. The bracketed number represents the time interval of the TIS state of the node. The arrow represents the edge of the TIS Tree pointing from the parent node to the child node. The self-point arrow at vertices $2$ and $3$ indicates waiting for one time step. \label{fig:sitree} }

\end{figure}

\subsubsection{Low-level Solver}
The low-level solver computes the shortest path for a single agent under constraints, ensuring that each goal is visited at least once based on DSS. The low-level solver comprises two stages: the GSI visiting order search stage and the path-generating stage.

The GSI visiting order stage search uses an A* solver for the best GSI visiting order. The state in the search can be represented by $(L, p)$, where $L$ is the visited goal set and $p$ is the SI where the agent is currently located.
The cost of the state $(L, p)$ is the length of the path that has visited all the goals in $L$ and is currently located in $p$. Let the $g$ be the cost value which is the same as the current time step, the $h$ be the heuristic value, and the $f$ be the evaluation value. The cost of the minimum spanning tree (MST) of the currently located vertex and all unvisited goal vertices is used as the $h$. The distance table $D$ can be utilized to construct the MST. During the search, the state $(L, p)$ can transfer to $(L \cup \{ q.v \}, q)$, where the $q$ is a GSI whose vertex is not in $L$, and the $q.v$ refers to the vertex corresponding to the $q$. It means that the agent moves from the $p.v$ at the start time step $g$ to the GSI $q$, whose vertex is an unvisited goal vertex $q.v$. 
The transfer cost is the shortest path length from the $p.v$ at time step $g$ to the GIS $q$ under the constraint set. It can be directly obtained from the TIS Forest. Specially, if $q.v$ is the final unvisited goal, only the latest GSI at $q.v$ can be chosen as the next GSI for avoiding conflicts after the agent finishes at the final goal.

The path-generating stage is executed when the minimum-cost state that visits all goals is found. We backtrack the final state to the initial state to get the GSI visiting order and use the TIS Forest to build the path based on the order.

Algorithm \ref{alg:low} shows the pseudocode of the low-level solver. In lines 1 $\sim$ 7, the initial node is initialized and put into the open set, while the $R.p$ is the earliest SI at the start vertex of the agent. The heuristic value of the node is calculated by the cost of the MST in line 5. At each iteration, the node with the minimum $f$ value is popped from the open set (lines 9 $\sim$ 11). If the current node $N$ has already visited all goals, extract the GSI visiting order and then build the path (lines 12 $\sim$ 16). Otherwise, all GSI at unvisited goals is enumerated with the exception of the last unvisited goal, for which only the lastest GSI is considered (lines 17). Let $q$ be the current enumerated GSI. A node $M$ is constructed where $M.L = N.L \cup \{q.v\}$ and $M.p = q$ (line 18). The TIS Tree corresponding to $q$ is filtered from the TIS Forest and used to obtain the path length from $(N.p).v$ at the time step $N.g$ to $q$ (lines 20 $\sim$ 21). $M$ is updated if it can be improved by $N$ (lines 22 $\sim$ 30).

\subsection{Time-Interval-Space Forest \label{subsection:sif}}
The shortest path length to the GSI is frequently queried in the low-level solver. If the MGCBS employs the A* algorithm to obtain the path length, it will become time-consuming. Observing that the shortest path length to a specific GSI may be queried multiple times during the search, we consider utilizing a data structure to reduce the redundant computation. It is not trivial because the start time of each query is unknown before the search and can only be obtained after the path to the previous GSI is generated. Considering that the constraints are related to the time step, the optimal path of different start time steps might be diverse, making it hard to reuse the result of the previous search. 

We propose TIS Forest to reduce redundant calculations. Each TIS Forest corresponds to an agent. It consists of several TIS Trees, each corresponding to a GSI of the agent. 
The TIS Tree maintains the shortest path and its length from any start vertex at any start time to the GSI. We name that GSI as the seed GSI of the TIS Tree. 

Let $(t,s)$ be a time-space state (TS state) representing that the agent is located at the vertex $s$ at the time step $t$. Let $([t_0, t_1], s)$ be a time-interval-space state (TIS state), representing a collection of TS states at the same vertex, i.e., $([t_0, t_1], s) = \{(t,s)| t\in[t_0, t_1]\}$. A TIS state is safe if and only if it does not contain any TS states under vertex constraints. It should be noted that the SI is a special type of TIS state, while the TIS state does not always need to be maximal.

Each node in the TIS Tree represents a safe TIS state, containing the TS states that take the same vertex sequence as the shortest path to the seed GSI. Therefore, different TS states in a TIS state have the same shortest path length to the seed GSI. We define the cost of the node as the shortest path length of the TS states in the TIS state to the seed GSI. There might be more than one node at a vertex when some vertex constraints exist at the vertex. 

The TIS Tree is constructed by a Dijkstra-based algorithm \cite{dijkstra1959note}. Initially, we find all maximal TIS states (i.e., SI) at each vertex and create a node for each of them. We refer to the node corresponding to the seed GSI as the seed node. If the node is the seed node, its cost is set to $0$; otherwise, its cost is set to $+\infty$. 
At the beginning, all nodes are put into an unvisited set $U$. 
At each iteration of the search, the node with the minimum cost in $U$ pops out and is used to improve the path of its neighbor through the reverse edge. 
Specifically, let $([t_0, t_1], v)$ be the TIS state of the current minimum cost node $n$ and $cost(n)$ be the cost of $n$. 
Let $v'$ be a vertex that can take one action to transfer to $v$, i.e., $v'\in Nbr(v)$ and $Nbr(v) = \{v' | (v', v) \in E\} \cup \{v\}$. 
Now we consider how to improve the path of the nodes at $v'$ by $([t_0, t_1], v)$. We construct a TIS state set denoted by $B$ at vertex $v'$, such that each TS state in the TIS state is safe and can transfer to a TS state in $([t_0, t_1], v)$ at a time step. For example, $([t_0, t_1], v)=([2, 8], v)$ and here is a vertex constraint at vertex $v'$ at time step 3 and an edge constraint from vertex $v'$ to vertex $v$ at time step 6, the $B$ will be $\{([1,2],v'), ([4,5], v'), ([7,7], v')\}$.
We enumerate all nodes at $v'$ and all TIS states in $B$. Let $([t_0', t_1'], v')$ be the TIS state of the current enumerated node $n'$ and $([t_0^B, t_1^B], v')$ be the TIS state in $B$. 
If $cost(n') > cost(n) + 1$ and $[t_0', t_1']$ can be fully covered by $[t_0^B, t_1^B]$, the $n'$ can be improved by setting $n$ as its parent and $cost(n') =  cost(n) + 1$. If $cost(n') > cost(n) + 1$ and only a part of $[t_0', t_1']$ is covered by $[t_0^B, t_1^B]$, $n'$ will be divided into several new nodes according to the coverage, and only the new node whose time interval is fully covered by $[t_0^B, t_1^B]$ can be improved by $n$. 
When a node needs to be divided, the $U$ is updated by deleting the original node and adding the new nodes.
The search will stop when the $U$ is empty. Figure \ref{fig:sitree} shows an example of the TIS Tree. 

After the construction, the TIS Tree can be used to query the shortest path from any start vertex at any start time to the seed GSI by the following steps. Firstly, we construct a TS state according to the start time and vertex. Secondly, we find out the node whose TIS state contains the TS state. 
Finally, we backtrack this node to the seed node and construct the shortest path through the vertices of passed nodes. Furthermore, we can get the shortest path length directly using the cost of the node found in the second step without building the path explicitly.
When we need to obtain the shortest path or its length to a GSI by a TIS Forest, we can extract the TIS Tree that corresponds to the GSI and use it for the query.

The concept of TIS Forest may seem similar to the SIPP algorithm \cite{phillips2011sipp} as both the TIS Forest and the SIPP algorithm search based on time intervals. However, their underlying principles differ. The SIPP algorithm utilizes a forward search, with each SI containing a single dominant time step, rendering all other time steps unimportant for the search. This property allows the SIPP algorithm to reduce the total number of nodes during the search. In contrast, the TIS Forest employs a backward search. On the one hand, we cannot use a mechanism similar to the SIPP algorithm to construct the TIS Forest because the precise goal-reaching time step is unknown beforehand, and the dominant time step doesn't exist. On the other hand, the TIS Forest is built by the principle that the shortest path to the seed GSI from all TS states within a TIS state have the same vertex sequence, and all TS states within a TIS state can be expanded simultaneously along a reverse edge.


\section{Theoretical Analysis \label{ss:analysis}}
In this section, we will prove the optimality and completeness of MGCBS.
\begin{lemma}
For two TS states in the same SI, the minimum completion time of the path to visit a set of goals at least once from the earlier TS state will not exceed the time from the later TS state.
\end{lemma}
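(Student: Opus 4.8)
The plan is to prove the inequality by an explicit \emph{path-simulation} argument: an agent departing from the earlier TS state can reproduce any feasible plan available to an agent departing from the later TS state, simply by waiting in place first. Fix the safe interval $([t_0,t_1],s)$ and let the two TS states be $(t_a,s)$ and $(t_b,s)$ with $t_0 \le t_a \le t_b \le t_1$. Let $\pi_b$ be an optimal feasible path from $(t_b,s)$ that visits the prescribed goal set at least once, and let $T_b$ denote its completion time, i.e.\ the absolute time step at which the last required goal is reached. It suffices to exhibit a feasible path $\pi_a$ from $(t_a,s)$ whose completion time is at most $T_b$, since this forces the minimum completion time $T_a$ from the earlier state to satisfy $T_a \le T_b$.

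First I would construct $\pi_a$ by concatenation: the agent waits at $s$ for the time steps $t_a, t_a+1, \ldots, t_b$, and then executes the moves of $\pi_b$ verbatim from time step $t_b$ onward. Because the tail of $\pi_a$ coincides with $\pi_b$ at identical absolute time steps and identical vertices, every vertex it occupies and every edge it traverses for $t \ge t_b$ inherits feasibility directly from $\pi_b$; moreover $\pi_a$ visits exactly the same goal set and finishes at the same time step $T_b$.

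The crux is to verify that the prepended waiting segment is feasible and that the handoff at $t_b$ is seamless. Here I would invoke the definition of a safe interval directly: since $[t_a,t_b] \subseteq [t_0,t_1]$ and $([t_0,t_1],s)$ is safe, no vertex constraint forbids $s$ at any step in $[t_a,t_b]$. A waiting action traverses no edge, so no edge constraint applies during this segment either. At the junction the agent occupies $s$ at time $t_b$ in both plans, and $\pi_b$'s first action out of $(t_b,s)$ was already feasible, so $\pi_a$ performs the identical action under the identical constraint set. Hence the whole of $\pi_a$ is feasible with completion time $T_b$, which yields $T_a \le T_b$.

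I expect the only delicate point to be this feasibility check, and in particular the reliance on the \emph{same-SI} hypothesis: it is precisely what guarantees that the inserted wait does not cross a constraint boundary between $t_a$ and $t_b$. Were the two states in different safe intervals, there would necessarily be a forbidden step separating them, and the simulation would break; so the hypothesis is used essentially rather than cosmetically. This monotonicity is exactly what later justifies searching over GSIs rather than earliest-arrival vertices in the DSS approach.
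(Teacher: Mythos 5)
Your proof is correct and rests on exactly the same idea as the paper's: the agent starting at the earlier TS state waits at $s$ until the later time step and then mirrors the optimal path of the later TS state, which is feasible precisely because both states lie in the same safe interval. The paper merely wraps this wait-then-follow construction in a proof by contradiction, whereas you state it directly (and with a more explicit feasibility check of the waiting segment), but the argument is essentially identical.
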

\begin{proof}
We prove it by contradiction.
Let ($t_0$, s) and ($t_1$, s) be two TS states in the same SI where $t_1 > t_0$.
Define $C(t,s)$ as the minimum completion time of the path to visit a set of goals at least once from a TS state $(t,s)$.
Suppose, towards a contradiction, that $C(t_1, s) < C(t_0, s)$.
Consider the optimal path that achieves the completion time $C(t_0, s)$.
This path can be modified to stay at vertex $s$ until time $t_1$ and then follow the same sequence of vertices as the optimal path starting from $(t_1, s)$. The modified path would visit all goals no later than the optimal path from $(t_1, s)$, yielding a completion time that is at most $C(t_1, s)$, in direct contradiction to the supposition that $C(t_1, s) < C(t_0, s)$.
\end{proof}
\begin{lemma}
The path maintained in the TIS Tree is the optimal path to the seed GSI of the TIS tree under given constraints.
\end{lemma}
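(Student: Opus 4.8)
The plan is to recognize the TIS Tree construction as a backward Dijkstra computation on the graph whose vertices are the safe TIS states and whose (reverse) edges have unit, hence non-negative, weight, and then to prove the lemma by establishing the standard Dijkstra finalization invariant, suitably adapted to the dynamic node-splitting. Concretely, for a TS state $(t,s)$ that is safe under the given constraints, let $d(t,s)$ denote the true minimum number of time steps of a constraint-respecting path from $(t,s)$ to some TS state lying in the seed GSI. I would show that whenever a node $n$ is popped from $U$ as the minimum-cost element, $cost(n)=d(t,s)$ for every TS state $(t,s)$ contained in the TIS state of $n$, and that the chain of parent pointers from $n$ back to the seed node spells out a vertex sequence realizing this optimum. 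The lemma then follows immediately, since the query procedure of the TIS Tree returns exactly this backtracked path.

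The core is an induction on the order in which nodes are extracted from $U$. For the base case, the seed node is extracted first with cost $0$, which is correct because its TS states already lie in the seed GSI. For the inductive step, I assume every earlier-extracted node has the correct cost and let $n$ be the node extracted now. One direction is easy: a cost is assigned to a node only through a relaxation along a reverse edge from an already-processed neighbor, so $cost(n)$ is the length of an actual path and therefore $d(t,s)\le cost(n)$. For the reverse inequality I use the cut/exchange argument: take an optimal path $P$ of length $d(t,s)$ from some $(t,s)\in n$, walk it toward the seed, and let $(t'+1,u')$ be the first TS state on $P$ lying in an already-finalized node $m$ (this exists because the seed node is finalized), with predecessor $(t',u)$ on $P$. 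By the inductive hypothesis $cost(m)=d(t'+1,u')$, and optimality of $P$ gives $d(t',u)=cost(m)+1$. When $m$ was finalized it relaxed its reverse-neighbors along the edge $(u',u)$, so the node then containing $(t',u)$ received $cost\le cost(m)+1=d(t',u)$. Because each step has unit weight, the prefix of $P$ from $(t,s)$ to $(t',u)$ is non-negative in length, so $d(t',u)\le d(t,s)$; since that node is still in $U$ and $n$ is the current minimum, $cost(n)\le d(t',u)\le d(t,s)$. Combined with the realizability bound this yields $cost(n)=d(t,s)$.

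I would then establish the per-node invariant that every TS state inside a single node admits the same backtracked vertex sequence, so that assigning one $cost(n)$ per node is well founded. This rests on the splitting rule: a node $n'$ at vertex $v'$ is improved by $n$ at vertex $v$, with $v'\in Nbr(v)$, only on the portion of its interval that is fully covered by a member of the transfer set $B$, meaning every TS state on that portion legitimately transfers into a TS state of $n$ under the constraints; any uncovered portion is split off into a separate node and is not improved. Consequently, following the parent pointer from any TS state of a node always lands in a TS state of the parent node, and inductively up to the seed all TS states in one node trace out an identical vertex sequence of the established optimal length, justifying the simultaneous reverse expansion.

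The main obstacle, relative to textbook Dijkstra, is precisely this dynamic creation and subdivision of nodes: the node containing $(t',u)$ at the moment $m$ is finalized may later be split, so I must argue that the invariant survives every split. The key observation is that a split only partitions a TIS state into sub-intervals at the same vertex, changing neither the vertex nor the membership of any individual TS state, and that the relaxed cost is inherited by exactly the fully-covered sub-node; hence the cost associated with any fixed TS state never increases under splitting, and the minimum-cost extraction continues to finalize a correct distance. Carefully formalizing that ``the node containing $(t',u)$'' is well defined throughout the search and that its cost lower bound persists across all subdivisions is where the genuine work of the proof lies.
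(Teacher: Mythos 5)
Your proposal is correct, and at bottom it takes the same route as the paper: recognize the construction as a backward Dijkstra run and conclude optimality from Dijkstra's correctness. The difference is one of rigor rather than of strategy. The paper's entire proof is two sentences---it states that the TIS Tree is built by a backward version of Dijkstra's algorithm and appeals to the optimality of Dijkstra, full stop. Your proof actually establishes the finalization invariant ($cost(n)=d(t,s)$ for every TS state contained in an extracted node, via the base case at the seed node, the realizability bound, and the cut/exchange argument), and, more importantly, it confronts the one place where the algorithm is \emph{not} textbook Dijkstra: the dynamic splitting of interval nodes during relaxation. Your per-node invariants---that every TS state inside a node shares the same backtracked vertex sequence, that the covered sub-node inherits the relaxed cost, and that a split never increases the cost attached to any fixed TS state---are exactly the content that the paper's citation silently assumes; without them the phrase ``optimality of the Dijkstra algorithm'' does not literally apply, because the vertex set of the search graph is being rewritten as the search proceeds. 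So where the paper buys brevity at the price of leaving the nonstandard state space unexamined, your argument supplies the justification that correctness survives node subdivision, which is the genuine mathematical content of the lemma; the residual work you flag (showing ``the node containing $(t',u)$'' is well defined throughout and its cost lower bound persists across subdivisions) is the right remaining obligation, and it goes through precisely because splits partition intervals at a fixed vertex without changing any individual TS state's membership or increasing its associated cost.
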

\begin{proof}
  The TIS Tree is constructed based on a backward version of the Dijkstra algorithm \cite{dijkstra1959note}. The optimality of the TIS Tree can be guaranteed by the optimality of the Dijkstra algorithm.
\end{proof}

\begin{lemma}
The function $GeneratePath$ in Algorithm \ref{alg:low} can obtain the optimal path following a given GSI visiting order under given constraints.
\end{lemma}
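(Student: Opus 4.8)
The plan is to prove that $GeneratePath$, which concatenates the shortest segments between consecutive GSIs drawn from the TIS Forest, produces a path whose cost is minimal among all paths that respect the fixed order. Write the given GSI visiting order as $\sigma = (q_1, q_2, \ldots, q_m)$, and let $T_i^\ast$ denote the earliest time at which the agent can be located at the vertex of $q_i$ subject to having visited $q_1, \ldots, q_i$ in this order while obeying all constraints. Since the cost of a path equals its completion time, which is the arrival time at the final GSI $q_m$, it suffices to show that $GeneratePath$ realizes a feasible path arriving at every $q_i$ exactly at $T_i^\ast$; in particular it arrives at $q_m$ at $T_m^\ast$, giving the optimal cost.

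First I would argue this by induction on the index $i$. For the base case, the first segment runs from the agent's start state to $q_1$; by Lemma~2 the TIS Tree seeded at $q_1$ stores the optimal path to $q_1$ from any start state, so $GeneratePath$ arrives at $q_1$ at the earliest possible time $T_1^\ast$. For the inductive step, assume $GeneratePath$ reaches $q_i$ at $T_i^\ast$. The next segment is queried from the state $(T_i^\ast, q_i.v)$ in the TIS Tree seeded at $q_{i+1}$, which by Lemma~2 yields the shortest path and hence the earliest arrival at $q_{i+1}$ \emph{given a departure from} $(T_i^\ast, q_i.v)$.

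The main obstacle, and the step on which the whole argument hinges, is ruling out that some alternative path reaches $q_{i+1}$ strictly earlier by arriving at $q_i$ at a \emph{later} time $T' > T_i^\ast$. Here I would invoke Lemma~1: because $q_i$ is a single safe interval, both $T_i^\ast$ and $T'$ lie in the same SI at $q_i.v$, so waiting inside that interval is always conflict-free. Grafting the alternative path's continuation (which departs at $T'$) onto the greedy prefix that arrives at $T_i^\ast$ and then waits until $T'$ produces a feasible path to $q_{i+1}$ that arrives no later; equivalently, the minimum completion time from the earlier TS state cannot exceed that from the later one. Hence arriving earlier at $q_i$ never postpones the reachable arrival time at $q_{i+1}$, and combining this dominance with the per-segment optimality of Lemma~2 yields $T_{i+1}^\ast$, closing the induction.

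Finally I would verify that the concatenation produced by $GeneratePath$ is genuinely feasible: each segment obeys the constraints (it is extracted from a TIS Tree built under exactly those constraints), and consecutive segments join consistently because the departure state of each query coincides with the arrival state of the previous segment. Therefore the assembled path visits $q_1, \ldots, q_m$ in the prescribed order at times $T_1^\ast, \ldots, T_m^\ast$, and its cost $T_m^\ast$ is minimal over all order-respecting paths, which is precisely the claim of the lemma.
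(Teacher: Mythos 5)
Your proof is correct and follows essentially the same route as the paper's: induction over the prefix of the GSI visiting order, per-segment optimality from Lemma~2, and the dominance of earlier arrival within a safe interval from Lemma~1. The only difference is that you spell out explicitly the wait-and-graft argument ruling out that a later arrival at $q_i$ could help, a step the paper's proof invokes only implicitly via Lemma~1.
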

\begin{proof}
The resulting path of $GeneratePath$ is constructed by iteratively concatenating the subpath to the next GSI by the TIS Tree, and the length of the subpath is shortest according to the Lemma 2.
The current Lemma can be proved by induction. The search starts from a TS state with time step 0. Assume that using this construction method can obtain the path with the minimum completion time following the first $k$ GSI visiting order. According to Lemma 1, concatenating the shortest subpath to the $k+1$ GSI can obtain the path with the minimum completion time following the first $k+1$ GSI visiting order. Therefore, by the principle of induction, the final resulting path has the minimum completion time following the whole GSI visiting order. 
\end{proof}

\begin{theorem}
MGCBS is an optimal solver of the MG-MAPF problem.
\end{theorem}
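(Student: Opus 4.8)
The plan is to decompose the optimality argument along the two-level structure of MGCBS: first establish an auxiliary claim that the low-level solver returns a minimum-cost single-agent path visiting all goals under an \emph{arbitrary} constraint set, and then invoke the standard best-first argument of conflict-based search at the high level. I would state the low-level optimality as a separate lemma and prove the theorem assuming it, so that the CBS-style reasoning at the top level stays clean and modular.

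For the low-level claim I would proceed in three steps. First, I would verify that the MST-based heuristic $h$ is admissible: any remaining portion of a path that still has to visit a set of unvisited goals is a walk spanning the current vertex together with those goals, so its length is at least the cost of a minimum Hamiltonian path over those vertices, which is at least the cost of their MST; since the distance table $D$ stores unconstrained shortest distances that never exceed the true constrained distances, the computed MST cost is a valid lower bound on the remaining cost, and I would additionally confirm the consistency condition needed for graph-search A* to pop goal-complete states in nondecreasing $f$. Second, and most importantly, I would show the GSI enumeration does not discard the true optimum: taking any optimal conflict-respecting single-agent path, I read off the order in which it first reaches each goal and observe that each first-arrival lies inside some safe interval at that goal, inducing a GSI visiting order that is a reachable sequence of transitions in the A* state space over states $(L,p)$. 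By Lemma 1 it is without loss of generality to assume the path reaches each chosen GSI at the earliest time step within that interval, so the transfer costs drawn from the TIS Trees (optimal by Lemma 2) match or lower-bound the segment costs of the optimal path. Third, Lemma 3 then guarantees that the concatenated path realizing the discovered order is itself optimal for that order, so the minimum-$f$ goal-complete state yields a globally minimum-cost single-agent path; I would separately check that restricting the final goal to its latest (unbounded) GSI loses nothing, since after finishing the agent must wait at the final goal indefinitely and hence can only terminate in the safe interval extending to $+\infty$.

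For the high-level optimality I would reuse the correctness skeleton of CBS, which rests on two facts. First, the constraint split is sound and loss-free: for the earliest conflict between two agents at a vertex or edge at time $t$, every valid solution must violate at least one of the two generated constraints, so no feasible solution is removed from the union of the child subtrees. Second, the SOC stored at each constraint-tree node is a lower bound on the cost of every solution consistent with that node's constraints, because the low-level solver is optimal and adding constraints can only increase an agent's minimum path cost; hence the node cost is monotone nondecreasing along any root-to-leaf branch. Since $OPEN$ is expanded in best-first order of SOC, the first conflict-free node extracted has cost no greater than that of any remaining node, and therefore no greater than the cost of any feasible solution, which establishes optimality.

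The main obstacle I expect is the second step of the low-level argument, namely that DSS enumeration captures the true optimum — exactly the point at which DVS fails in Theorem 1. The delicate part is justifying the ``earliest within the chosen safe interval'' normalization through Lemma 1 while simultaneously arguing that the candidate intervals produced by $GetAllPotentialGSI$ coincide with the full set of safe intervals induced by the current constraints, so that the interval containing each goal-arrival of the optimal path is guaranteed to appear as a reachable transition in the A* search. Getting this correspondence tight, rather than merely inclusive, is what prevents an off-by-one-interval gap and is where I would spend most of the care.
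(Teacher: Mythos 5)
Your proposal is correct and follows the same decomposition as the paper: low-level optimality from the admissible and consistent MST heuristic together with Lemma 3 and the optimality of A*, then high-level optimality from the standard CBS best-first argument. The difference is one of completeness rather than of route. The paper's proof of this theorem is three sentences: it asserts admissibility and consistency of the MST heuristic, cites Lemma 3 and ``the optimality of A*'' for the low level, and cites ``the optimality of CBS'' for the high level. What it never makes explicit is your second step --- that the DSS state space over pairs $(L,p)$ is \emph{lossless}, i.e., that any optimal constraint-respecting single-agent path induces a GSI visiting order realizable by the transitions of Algorithm 2, whose accumulated $g$-values (via Lemma 1 for the within-interval normalization and Lemma 2 for the TIS-Tree transfer costs) do not exceed the corresponding first-arrival times of that path. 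Invoking A* optimality only gives optimality \emph{relative to the searched state space}; since Theorem 1 of the paper shows that the analogous losslessness claim is exactly what fails for the DVS state space, this is the load-bearing step, and your proposal is the only one of the two that actually argues it. Likewise, your check that restricting the last goal to its latest (unbounded) GSI loses nothing --- because a feasible path must let the agent rest at its final goal forever, forcing the final arrival into the safe interval extending to $+\infty$ --- handles a case the paper's proof passes over silently. In short, you buy rigor at the precise points where the paper's one-line citations conceal the real content; the paper buys brevity by delegating everything to Lemmas 1--3 and to the known correctness of A* and CBS.
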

\begin{proof}
In the low-level solver, the heuristic function is admissible and satisfies the consistency assumption according to the property of MST. The optimality of the low-level solver can be guaranteed by the Lemma 3 and the optimality of A* \cite{hart1968formal}.
According to the optimality of CBS and the low-level solver, the optimality of MGCBS can be guaranteed.
\end{proof}

\begin{theorem}
MGCBS is a complete solver of the MG-MAPF problem.
\end{theorem}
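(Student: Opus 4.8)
The plan is to establish completeness in two senses: (i) whenever a solution to the MG-MAPF instance exists, MGCBS returns one, and (ii) the algorithm always terminates. I would decompose the argument along the two levels of the solver, mirroring the structure of the optimality proof, and then close with a finiteness/termination argument, which I expect to be the crux.

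First, I would prove completeness of the low-level solver $MGCBS_{low}$ (Algorithm~\ref{alg:low}): under a fixed constraint set $cons$, it returns a valid path visiting all of agent $a$'s goals whenever such a path exists, and returns $NULL$ otherwise. The key observations are that, by Lemma~2, each TIS Tree faithfully encodes the shortest path and its finite length from any TS state to its seed GSI, so $GetPathLength$ returns a finite value precisely when the corresponding GSI is reachable under $cons$. The A* search expands states $(L,p)$ and, at each expansion, enumerates every potential GSI at every unvisited goal via $GetAllPotentialGSI$; hence it systematically explores the finite space of GSI visiting orders. If some order of GSIs visits all goals, the search reaches a state whose $L$ contains all goals and returns a path by Lemma~3; if no order does, $OPEN$ empties and it returns $NULL$. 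This yields low-level completeness.

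Second, I would give the standard CBS branching argument for the high level (Algorithm~\ref{alg:high}). Fix any conflict-free solution $S$ of the instance. At every expanded node, the earliest conflict $F$ is resolved by a pair of constraints that partition the ways that conflict can be avoided; since $S$ is conflict-free, it must satisfy at least one constraint of the pair. Following such consistent branches from the root yields a root-to-node path along which every accumulated constraint is satisfied by $S$. At that node the low-level solver (by the previous paragraph) returns, for each agent, a path of cost no greater than that agent's cost in $S$, so a constraint-tree node consistent with $S$ is generated and inserted into $OPEN$. Combined with the optimality of MGCBS established above in Section~\ref{ss:analysis} and the fact that $OPEN$ is expanded in nondecreasing SOC order, MGCBS returns a solution of cost at most $\mathrm{SOC}(S)$.

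The main obstacle is termination and finiteness of the search, because agents may wait arbitrarily long, so neither the path lengths nor the number of candidate constraints is a priori bounded. I would handle this by invoking a makespan bound: if any solution exists, then, by the standard polynomial upper bound on the makespan of solvable graph-based multi-agent routing instances, one exists whose makespan is bounded by some $T^{*}$ depending only on $|V|$ and $k$. Constraints placed at time steps beyond $T^{*}$ cannot help resolve any conflict in such a solution, so only finitely many distinct constraint sets are relevant; since each constraint-tree node adds exactly one constraint to its parent, the relevant portion of the constraint tree has bounded depth and finite branching, hence finitely many nodes. As node costs are nonnegative integers that are nondecreasing along tree branches and the high level expands nodes in nondecreasing SOC order, MGCBS expands only finitely many nodes before either returning the optimal solution (when one exists) or emptying $OPEN$ and returning $NULL$ (when none does). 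This establishes completeness in both directions.
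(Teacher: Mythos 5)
Your decomposition matches the paper's: the paper also proves completeness in three steps (completeness of the TIS Tree, then of the low-level solver via the completeness of A* and the TIS Tree, then of the whole algorithm via the completeness of CBS), though the paper's version is far terser and simply cites CBS's known completeness where you spell out the branching and termination arguments. One presentational divergence: you attribute the TIS Tree's completeness to Lemma~2, which is an optimality statement; the paper instead gives a separate reachability argument (if a feasible path $r$ reaches a goal vertex at time $t_v$, a TIS Tree whose seed GSI contains $t_v$ exists, and its seed node can be expanded backwards along $r$ to the start state), which is the cleaner justification that every reachable TS state receives finite cost in the tree. This is a shortcut rather than an error, since Dijkstra optimality over the fully initialized node set does imply finite cost exactly for reachable states.

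The genuine gap is in your final claim of ``completeness in both directions.'' Your termination argument rests on a makespan bound $T^{*}$ that exists only \emph{when a solution exists}; you then conclude that when no solution exists, $OPEN$ empties and MGCBS returns $NULL$. That conclusion does not follow from your argument: on an unsolvable instance there is no $T^{*}$, constraints at arbitrarily large time steps can keep being generated, the constraint tree can be infinite, and the high level need never empty $OPEN$. CBS-style solvers are in general not guaranteed to terminate on unsolvable instances without extra machinery (e.g., an a priori bound on solution cost or a solvability test), and neither the paper's proof nor yours establishes such termination. The theorem as the paper proves it only requires the first direction --- a solution is returned whenever one exists --- which your argument does establish; you should either drop the second direction or add an explicit cost bound or solvability check to justify it.
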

\begin{proof}
The completeness can be proven by following steps.
Firstly, based on the TIS forest design, if there is a feasible path $r$ from one start/goal vertex $u$ at time step $t_u$ to reach one goal vertex $v$ at time step $t_v$, there will not be a vertex constraint on vertex $v$ at time step $t_v$. A TIS Tree must exist, whose seed GSI includes time step $t_v$ at vertex $v$. This node can be iteratively expanded following the reverse direction of $r$ to reach the node whose TIS state includes time step $t_u$ at vertex $u$. Therefore, the TIS Tree is complete. Secondly, if there is a feasible solution for a multi-goal single-agent pathfinding problem, their corresponding GSI visiting order could be searched in the low-level solver, and the path could be constructed according to the completeness of A* and TIS Tree. Thirdly, MGCBS is complete according to the completeness of CBS and the low-level solver.
\end{proof}

\section{Experiment}
We verify our proposed algorithm's effectiveness and optimality on the 4-neighbor grid maps. 
We randomly sample each agent's start and goal points in the grid map, ensuring that the start points of the different agents are distinct. 
The test computer is equipped with an I9-7900X CPU with 3.3 GHz and 32GB RAM. The code is publicly available at
https://github.com/tangmingkai/MGCBS.

We use the following four algorithms for the experiments.
\begin{itemize}
\item \textbf{HCBS (A1)}: A three-level algorithm \cite{surynek2021multi}. The highest level solver is a CBS algorithm, and the middle level solver is an A* algorithm that searches a goal vertex visiting order. The lowest level solver is a classic A* for single-agent pathfinding. It is the current SOTA method based on DVS. 
\item \textbf{MGCBS without TIS Forest (A2)}: MGCBS with a modified low-level solver that uses the A* algorithm to obtain the shortest path and its length to a GSI. 
\item \textbf{MGCBS (A3)}: Our propose method. 
\item \textbf{CBS + A* (A4)}: An optimal algorithm that is the CBS with a coupled low-level search using an A* algorithm to search for the shortest path visiting all the goals.
\end{itemize}
\subsection{Experiment for Efficiency}
From the MAPF benchmark \cite{sturtevant2012benchmarks}, three grid maps from small to large are selected, namely `maze-32-32-4', `lak303d', and `orz900d'. They are marked as M1, M2, and M3, as shown in Figure \ref{fig:map}.
We generate 100 instances for each number of agents in the range of $\{2,4,6,8\}$. We fix each agent's goal number to $12$. We employ the average running time and the success rate as evaluation metrics for comparing the performance of A1, A2, and A3. A test case is unsuccessful if its running time exceeds 60 seconds, in which case the running time will be directly set to 60 seconds.
\begin{table}[tb]

    \small
    \centering
    \begin{tabular}{ccrrrr}
        \toprule
         Map & $k$ & A1&A2&A3 \\  
        \midrule
M1&2&98\% & 98\% &\textbf{100\%} \\ 
&4&69\% & 69\% &\textbf{86\%}  \\ 
&6&22\% &22\% &\textbf{50\%}  \\ 
&8&4\% &4\% &\textbf{9\%}  \\ 

M2&2&\textbf{100\%} &\textbf{100\%} &98\%   \\ 
&4&\textbf{93\%} &\textbf{93\%} &91\%   \\ 
&6&66\% &66\% &\textbf{74\%}   \\ 
&8&20\% &20\% &\textbf{50\%}  \\

M3&2&0\% &0\% &\textbf{100\%}  \\ 
&4&0\% &0\% &\textbf{73\%}   \\ 
&6&0\% &0\% &\textbf{8\%}   \\ 
&8&0\% &0\% &0\%  \\ 
        \bottomrule
    \end{tabular}
    \caption{Table of the success rate. $k$ is the number of agents.}
    \label{tab:large_succ}
\end{table}

\begin{table}[tb]
\small
    \centering
    \begin{tabular}{ccrrrr}
        \toprule
       map & $k$ & A1&A2&A3 \\  
        \midrule

M1&2&4.37(-)&4.52(0.97)&\textbf{0.63(6.94)} \\ 
&4&27.39(-)&27.84(0.98)&\textbf{12.16(2.25)} \\ 
&6&53.44(-)&53.69(1.00)&\textbf{37.47(1.43)} \\ 
&8&58.75(-)&58.75(1.00)&\textbf{56.41(1.04)} \\

M2&2&8.87(-)&8.87(1.00)&\textbf{3.79(2.34)} \\ 
&4&23.73(-)&23.79(1.00)&\textbf{12.23(1.94)} \\ 
&6&44.40(-)&44.40(1.00)&\textbf{24.60(1.80)} \\ 
&8&57.18(-)&57.23(1.00)&\textbf{42.26(1.35)} \\ 
M3&2&60.00(-)&60.00(1.00)&\textbf{17.73(3.38)} \\ 
&4&60.00(-)&60.00(1.00)&\textbf{42.96(1.40)} \\ 
&6&60.00(-)&60.00(1.00)&\textbf{58.73(1.02)} \\ 
        \bottomrule
    \end{tabular}
    \caption{Table of the average running time and the speedup ratio. The first number in the cell shows the average running time (sec), and the number in parentheses indicates the speedup ratio relative to A1. $k$ is the number of agents. We omit the result of $k=8$ on M3 because there are no successful instances.}
    \label{tab:large_run}
\end{table}

\begin{table}[tb]
\small
    \centering
    \begin{tabular}{cccrrrr}
        \toprule
     Map & Algorithm &  $SN$ & $DN$ & $MRE$ & $ARE$ &\\  
        \midrule
M4&A1&841&29&17.39\% &0.19\%  \\
  &A3&852&0&0.00\% &0.00\%  \\
M5&A1&896&4&7.14\% &0.03\%  \\
  &A3&896&0&0.00\% &0.00\%  \\
        \bottomrule
    \end{tabular}
    \caption{Table for evaluating the solution quality of A1 and A3. $SN$ is the number of instances both the target algorithm and A4 can successfully run. $DN$ is the number of instances where the solution cost produced by the target algorithm differs from those produced by A4. $MRE$ and $ARE$ are maximal and average relative error, respectively. }
    \label{tab:opt}
\end{table}

\begin{figure}[t]
\centering
\includegraphics[width=0.48\textwidth]{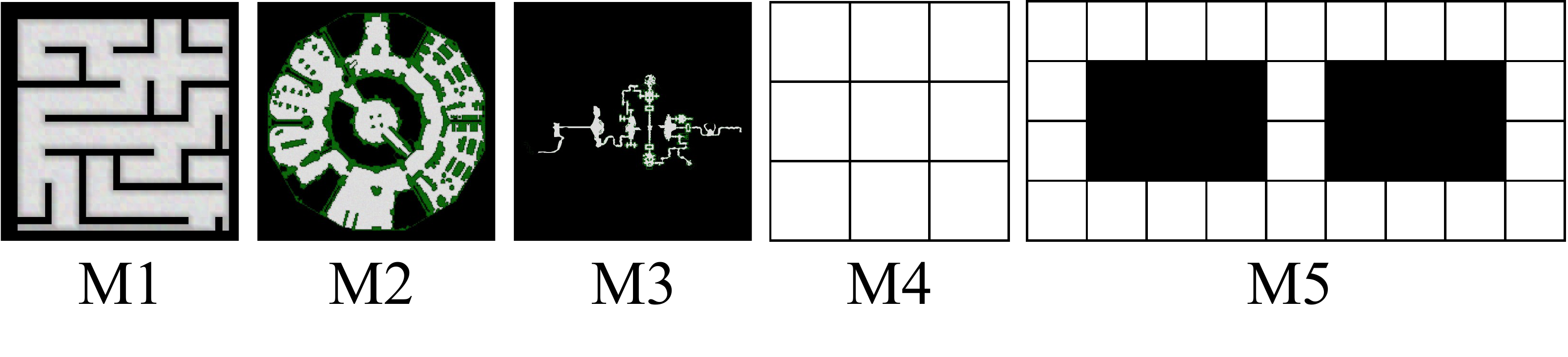} 
\caption{
The grid maps used in the experiment. \label{fig:map} }
\end{figure}
Table \ref{tab:large_succ} and Table \ref{tab:large_run} show the success rate, the average running time, and the speedup ratio to A1 of the other two on M1, M2, and M3. In most instances, A3 outperforms A1 and A2 regarding average running time and success rate. Specifically, the speedup ratio of A3 to A1 is nearly 7 when the number of agents is two on M1. When the number of agents is smaller, the acceleration is more pronounced because when the number of agents is large, unsuccessful instances will smooth the speedup ratio. On M3, A1 and A2 cannot solve any instances, while A3 solves all instances successfully when the number of agents is 2. In small numbers of instances on M2 with few agents, the speedup by using TIS Forest to find the shortest path cannot overlap the construction overhead, making the success rate lower than A1 and A2. However, regarding the average running time, A3 performs the best on the same map and the number of agents. In some instances, A2 runs slightly slower than A1 because A2, which searches the GSI visiting order in the low-level solver, has a higher computation complexity than A1, which only searches the goal vertex visiting order.

\subsection{Experiment for Optimality}
We use two self-defined grid maps, M4 and M5, in Figure \ref{fig:map}, to evaluate the optimality of our proposed algorithm. We generate 100 instances for the number of agents and goals ranging from 2 to 4, with a total of 900 instances for each grid map. We utilize relative error to evaluate the solution quality with A4 and the target algorithm (A1 and A3), only considering instances where both A4 and the target algorithm are successful. Specifically, we count the instance number where their costs differ and compute the maximum and average values of the relative error.

Table \ref{tab:opt} displays the number of instances in which the cost differs between the target algorithms and A4 across all successfully solved instances, as well as the maximum and average relative error. In some instances, A1 cannot obtain the optimal path with a maximal relative error exceeding 17\%, while A3 can obtain optimal results among all instances.

\section{Conclusion}
This work used the case study and experiment to demonstrate that the method based on decoupling the goal vertex visiting order search and the single-agent pathfinding is not optimal for the multi-goal multi-agent pathfinding problem. Hence, we proposed a two-level optimal and efficient solver, MGCBS, decoupling the goal safe interval visiting order search and the single-agent pathfinding. To obtain the shortest path and its length to each goal safe interval efficiently, we proposed the Time-Interval-Space Forest to maintain the shortest path from any start vertex at any start time step to the goal safe interval. Experiments have shown that MGCBS can consistently obtain the optimal result and significantly outperform the SOTA decoupled method regarding running speed.

\section*{Acknowledgments}
This work was supported by Guangdong Basic and Applied Basic Research Foundation (No. 2021B1515120032), and Guangzhou-HKUST(GZ) Joint Funding Program (No. 2024A03J0618). 

\bibliographystyle{named}
\bibliography{ijcai24}
\appendix



\begin{table}[tb]

    \small
    \centering
    \begin{tabular}{ccrrrr}
        \toprule
        $k$ & $n$ &  A1&A2&A3 \\  
        \midrule
2&4&\textbf{100\%} &\textbf{100\%} & \textbf{100\%}  \\ 
&8&\textbf{100\%} &\textbf{100\%} &99\%  \\ 
&12&\textbf{100\%} &\textbf{100\%} &99\%  \\
&16&82\% &82\% &\textbf{99\%}  \\ 
\hline
4&4&98\% &98\% &\textbf{99}\%  \\ 
&8&\textbf{98\%} &\textbf{98\%} &96\%  \\ 
&12&\textbf{91\%} &\textbf{91\%} &89\%  \\ 
&16&46\% &45\% &\textbf{86\%}  \\ 
\hline
6&4&\textbf{94\%} &\textbf{94\%} &82\%  \\ 
&8&\textbf{85\%} &\textbf{84\%} &74\%  \\ 
&12&70\% &69\% &\textbf{80\%}  \\ 
&16&12\% &12\% &\textbf{63\%}  \\ 
\hline
8&4&\textbf{83\%} &\textbf{83\%} &65\%  \\ 
&8&\textbf{66\%} &\textbf{66\%} &50\%  \\ 
&12&18\% &17\% &\textbf{44\%}  \\ 
&16&0\% &0\% &\textbf{40\%}  \\ 
        \bottomrule
    \end{tabular}
    \caption{Table of the success rate. $k$ is the number of agents. $n$ is the number of goals.}
    \label{tab:succ}
\end{table}

\begin{table}[tb]
\small
    \centering
    \begin{tabular}{ccrrrr}
        \toprule
       $k$ & $n$ & A1&A2&A3 \\  
        \midrule

2&4&\textbf{1.67(-)}&\textbf{1.67(1.00)}&1.87(0.89) \\ 
&8&4.06(-)&4.04(1.00)&\textbf{2.76(1.47)} \\ 
&12&9.28(-)&9.24(1.00)&\textbf{3.25(2.86)} \\
&16&28.85(-)&28.79(1.00)&\textbf{3.88(7.44)} \\ 
\hline
4&4&4.84(-)&4.84(1.00)&\textbf{4.44(1.09)} \\ 
&8&10.23(-)&10.22(1.00)&\textbf{7.73(1.32)} \\ 
&12&25.04(-)&25.01(1.00)&\textbf{12.89(1.94)} \\ 
&16&48.48(-)&48.57(1.00)&\textbf{17.23(2.81)} \\ 
\hline
6&4&\textbf{11.26(-)}&\textbf{11.26(1.00)}&18.78(0.60) \\ 
&8&24.32(-)&24.48(0.99)&\textbf{24.27(1.00)} \\ 
&12&41.09(-)&41.12(1.00)&\textbf{23.83(1.72)} \\ 
&16&58.11(-)&58.14(1.00)&\textbf{32.31(1.80)} \\ 
\hline
8&4&\textbf{20.1(-)}&20.13(1.00)&30.84(0.65) \\ 
&8&\textbf{38.17(-)}&38.35(1.00)&41.26(0.93) \\ 
&12&56.79(-)&56.84(1.00)&\textbf{44.64(1.27)} \\ 
&16&60.0(-)&60.0(1.00)&\textbf{47.35(1.27)} \\ 
        \bottomrule
    \end{tabular}
    \caption{Table of the average running time and the speedup ratio. The first number in the cell shows the average running time (sec), and the number in parentheses indicates the speedup ratio relative to A1. $k$ is the number of agents. $n$ is number of goals.}
    \label{tab:run}
\end{table}

\section{Experiments on Various Numbers of Goals}
In this section, we verify the efficiency of the MGCBS on various numbers of goals.
We use the grid map `lak303d', which is M2 in the main text for the experiment.
We generate 100 instances for the number of goals in the range of \{4, 8, 12, 16\} and the number of agents in the range of \{2, 4, 6, 8\}. We randomly generate a start point and goal points for each agent, while the start point for each agent is distinct. The success rate and the average running time are used as evaluation metrics. An instance is considered unsuccessful if the running time exceeds 60 seconds and the running time of the unsuccessful instance is directly set to 60 seconds.

We use three algorithms for our evaluation. The markings are consistent with those in the main text.
\begin{itemize}
\item \textbf{HCBS (A1)}: A three-level algorithm \cite{surynek2021multi}. The highest level solver is a CBS algorithm, and the middle level solver is an A* algorithm that searches a goal vertex visiting order. The lowest level solver is a classic A* for single-agent pathfinding. It is the current SOTA method based on DVS. 
\item \textbf{MGCBS without TIS Forest (A2)}: MGCBS with a modified low-level solver that uses the A* algorithm to obtain the shortest path and its length to a GSI. 
\item \textbf{MGCBS (A3)}: Our propose method. 
\end{itemize}

Table \ref{tab:succ} shows the success rate, and Table \ref{tab:run} shows the average running time of three algorithms. When the number of agents is fixed, A3 outperforms A1 and A2 when the number of goals is large in terms of success rate and average running time. The speedup ratio of A3 to the other two algorithms increases with the increase of the number of agents. On the one hand, when the number of goals is small, the middle-level solver of A1 and the low-level solver of A2 can obtain the goal vertex visiting order or the GSI visiting order with few trials in the search. In these cases, the single-agent pathfinding solver is not called for many times, and the A3, which uses the TIS Forest to reduce the redundant computation of the single-agent pathfinding, is not beneficial. The construction time of the TIS Forest causes a decrease in efficiency. On the other hand, when the number of goals is large, it takes serval trials to find the best goal vertex visiting order and the GSI visiting order, making many calls to the single-agent pathfinding solver in A1 and A2. The TIS Forest in A3 can significantly reduce redundant calculations and improve computational speed.

The average running time of A2 is almost equal to A1 in the experiments, and when the number of agents and the number of goals are large, A2 runs a bit slower than A1. This is because when the number of agents and goals are small, the number of conflicts is small and there is almost no more than one GSI at each goal vertex. When the number of agents and goals are large, there might exist several GSI at the goal vertices, and the search of the GSI visiting order is slower than the search of the goal visiting order.

\end{document}